\documentclass[10pt,english,twocolumn,conference]{IEEEtran}
\usepackage{amsmath}
\usepackage{lipsum}
\usepackage{amsthm}
\usepackage{float}
\usepackage{mathrsfs}
\usepackage{verbatim}
\usepackage{epstopdf}
\usepackage{amssymb}
\usepackage{amsfonts}
\usepackage{subfigure}
\usepackage{color}
\usepackage{cite}
\usepackage{cancel}
\usepackage{amsmath,amssymb,amsthm,mathrsfs,amsfonts,dsfont}
\usepackage[shortlabels]{enumitem}

\newcommand\independent{\protect\mathpalette{\protect\independenT}{\perp}}
\def\independenT#1#2{\mathrel{\rlap{$#1#2$}\mkern2mu{#1#2}}}

\usepackage[utf8]{inputenc}
\usepackage{graphicx}

\theoremstyle{plain}
\newtheorem{thm}{Theorem} 

\newtheorem{Prop}{Proposition}
\newtheorem{Rem}{Remark}
\newtheorem{corr}{Corollary}

\theoremstyle{definition}
\newtheorem{defn}{Definition}

\title{Privacy Against Brute-Force Inference Attacks}

\author{\IEEEauthorblockN{Seyed Ali Osia\IEEEauthorrefmark{1},
Borzoo Rassouli\IEEEauthorrefmark{2}, Hamed Haddadi\IEEEauthorrefmark{3}, Hamid R. Rabiee\IEEEauthorrefmark{1}, Deniz G\"{u}nd\"{u}z \IEEEauthorrefmark{3}}
\IEEEauthorblockA{\IEEEauthorrefmark{1} Sharif University of Technology,  osia@ce.sharif.edu, rabiee@sharif.edu}
\IEEEauthorblockA{ \IEEEauthorrefmark{2} University of Essex, b.rassouli@essex.ac.uk}
\IEEEauthorblockA{ \IEEEauthorrefmark{3} Imperial College London, \{h.haddadi, d.gunduz\}@imperial.ac.uk}
}

\begin{document}

\maketitle
\begin{abstract}
Privacy-preserving data release is about disclosing information about useful data while retaining the privacy of sensitive data. Assuming that the sensitive data is threatened by a brute-force adversary, we define \emph{Guessing Leakage} as a measure of privacy, based on the concept of \emph{guessing}. After investigating the properties of this measure, we derive the optimal utility-privacy trade-off via a linear program with any $f$-information adopted as the utility measure, and show that the optimal utility is a concave and piece-wise linear function of the privacy-leakage budget.
\end{abstract}
\section{Introduction}

Large-scale data collection and analysis is a key component of many recent technological advances such as autonomous driving, online health monitoring, reliable energy grid, and intelligent IoT systems. While these data-driven applications provide better services by efficiently processing user data in massive scales, collection and sharing of personal data is increasingly creating privacy risks, especially with the advances in machine learning and data mining algorithms. Thus, developing privacy-preserving data release mechanisms that jointly consider the utility from shared data with the associated privacy leakage is key to the wide scale adoption of some of these emerging technologies. 
As a classic example, publishing a general purpose database, or even releasing its aggregate statistics, may threaten an individual's privacy, which has led to the introduction of widely used techniques such as 
k-anonymity \cite{sweeney2002}, and differential privacy \cite{dwork06}. 


The information-theoretic formulation of privacy provides a general statistical framework to model both the utility and privacy, and allows investigating their trade-off as an optimization problem. Modeling the available dataset with random variable (r.v.) $Y$  and the private/sensitive latent variable with $X$, the data that should be released, denoted by $U$ is obtained as the output of a privacy-preserving statistical kernel (transformation), and can be obtained as the solution of an optimization problem. 
Mutual information is the most common measure of both utility and privacy, whereby the trade-off becomes the \emph{privacy funnel} \cite{makhdoumi2014}. 
However, there is no optimal/universal definition of privacy and it can either abstract away from the adversarial threats  \cite{sweeney2002}, \cite{Borzoo}, or depend on the vulnerability of the sensitive data to adversarial attacks~\cite{Liao, asoodeh2018}. In this paper, we assume that the adversary is a brute-force attacker, i.e., it performs an exhaustive trial and error attack over all the possible realizations of the sensitive data. 
There are many examples of such brute-force attacks in real life, where criminals steal private information by determining a cipher key via an exhaustive search \cite{christiansen2015}, or checking several potential shortened URLs to discover active links \cite{chu2013security}.


We assume that a brute-force adversary performs a number of guesses to successfully determine the value of private data $X$, modeled as a r.v. with finite support. A privacy-preserving mechanism is built against this brute-force adversary. To this end, we define \textit{guessing leakage}, and investigate its properties as a privacy measure. Afterwards, we formulate the utility-privacy trade-off as a non-convex optimization problem, and derive the optimal data release mechanism via a linear program (LP).

The problem of guessing is a well-established area in information theory. In \cite{Massey}, Massey proposed a lower bound on the minimum expected number of guesses needed to find $X$ in terms of its Shannon entropy $H(X)$, while in \cite{Arikan}, different moments of the guessing function are lower bounded in terms of the Renyi entropy of $X$, and the result is used to analyze the computational complexity of sequential decoding. Later works, such as  \cite{hanawal2011guessing}, apply large deviation techniques to more general scenarios in this context.
The problem of guessing also appears in Shannon-theoretic cryptography in \cite{merhav1999shannon}, i.e., encryption against a brute-force wiretapper. The problem of computational security against a guessing attacker has been addressed in \cite{christiansen2015, beirami2015quantifying}. The practical challenges of guessing passwords is studied in \cite{kelley2012guess,ur2015measuring}. 
However, limited attention has been devoted to the effect of data sharing on guessability of private information from a privacy-preserving point of view. In \cite{rachlin2009}, a sub-optimal utility-privacy trade-off is found by using the lower bound provided in \cite{Arikan}. 


\textbf{Notations.} R.v.'s are denoted by capital letters, and their realizations by lower case letters. Matrices and vectors are denoted by bold capital and bold lower case letters, respectively. 
For a positive integer $n$, we define $[n]\triangleq\{1,2,\ldots,n\}$. 
For a finite alphabet $\mathcal{X}$, the probability simplex $\mathcal{P}(\mathcal{X})$ is the standard $(|\mathcal{X}|-1)$-simplex. 
Furthermore, to each probability mass function (pmf) $p_{X}(\cdot)$ corresponds a probability vector $\mathbf{p}_X\in \mathcal{P}(\mathcal{X})$, whose  $i$-th element is $p_X(x_i)$ ($i\in[|\mathcal{X}|]$). Likewise, for a pair of r.v.'s $(X,Y)$ with joint pmf $p_{X,Y}$, the probability vector $\mathbf{p}_{X|y}$ corresponds to the conditional pmf $p_{X|Y}(\cdot|y),\forall y\in\mathcal{Y}$, and $\mathbf{P}_{X|Y}$ is an $|\mathcal{X}|\times|\mathcal{Y}|$ matrix with columns $\mathbf{p}_{X|y},\forall y\in\mathcal{Y}$. Statistical independence between $X$ and $Y$ is shown as $X\independent Y$.
For a convex function $f$ such that $f(1)=0$, and the probability mass functions $p,q$ on $\mathcal{X}$, the $f$-divergence is defined as\footnote{We assume that $p$ is absolutely continuous with respect to $q$, i.e., $q(x)=0$ implies $p(x)=0$.} $D_f(p||q)\triangleq \sum_{x\in\mathcal{X}}q(x)f(\frac{p(x)}{q(x)})$. Finally, for a pair $(X,Y)\sim p_{X,Y}$, the $f$-information is defined as $I_f(X;Y)\triangleq D_f(p_{X,Y}||p_X\cdot p_Y)$.



\section{System model}
\subsection{Preliminaries}

Consider a triplet of discrete r.v.'s $(X,Y,W)\in\mathcal{X}\times\mathcal{Y}\times\mathcal{W}$, with finite alphabets, and distributed according to $p_{X,Y,W}$. Let $Y$ and $X$ denote the useful data to be revealed, and the sensitive data to be concealed, respectively. As in \cite{Ishwar}, $W$ denotes what the user/curator directly observes, which may be a noisy representation of the pair $(X,Y)$. Assume that the \textit{privacy mapping}/\textit{data release mechanism} takes $W$ as input and maps it to the \textit{released data} denoted by $U$. In this scenario, $(X,Y)-W-U$ form a Markov chain, and the privacy mapping is captured by the conditional distribution $p_{U|W}$. 

The aim of the privacy mapping is to simultaneously preserve the fidelity of $Y$ and the privacy of $X$ by the release of $U$.
In this paper, we measure the utility of the release by the $f$-information between $Y$ and $U$, i.e., $I_f(Y;U)$, which incorporates mutual information as a special case, and define the privacy measure in the sequel.

\subsection{Guessing}
Consider the problem of guessing the realization of a discrete r.v. $X\in\mathcal{X}$ by asking questions of the form "Is $X$ equal to $x$?", until the answer is "Yes." As in \cite{Arikan}, a \textit{guessing function/strategy} of $X$ is denoted by a bijection $G(X):\mathcal{X}\to[|\mathcal{X}|]$. Hence, for a given guessing strategy $G(\cdot)$, $G(x)$ represents the number of required guesses when $X=x$. Likewise, for a pair of r.v.'s $(X,Y)$, $G(X|Y):\mathcal{X}\times\mathcal{Y}\to[|\mathcal{X}|]$ is the guessing function of $X$ given $Y$, i.e., $G(x|y)$ denotes the number of guesses required to determine $X=x$, when $Y=y$. For a given pmf $p_X$, let $G^*$ denote the minimizer(s)\footnote{It can be readily verified that $G^*$ is unique if and only if $p_X(x_i)\neq p_X(x_j),\ \forall i\neq j.$} of $\mathds{E}[G(X)]$, which guesses the value of $X$ in decreasing order of probabilities as shown in \cite{Massey}. The guessing entropy is defined as $H_G(X)\triangleq\mathds{E}[G^*(X)]$, which, from the guesser's point of view, measures the uncertainty in $X$, as it denotes the minimum average number of guesses required to determine its realization. We have $H_G(X)\in[1,\frac{|\mathcal{X}|+1}{2}]$, where the minimum or maximum are attained when $X$ is, respectively, deterministic or uniformly distributed over $\mathcal{X}$. Throughout the paper, the guessing entropy $H_G(X)$ and $H_G(\mathbf{p}_X)$ are written interchangeably\footnote{It is noteworthy to mention that $H_G(X)$ as the minimum number of guesses is not consistent with intuition when $|\mathcal{X}|=1$ or $2$, as in the former no guessing is needed, while $H_G(X)=1$, and in the latter, one guess is sufficient to determine the value of $X$, while $H_G(X)>1$. Although a more exact notion would be $H_G(X)\triangleq \min\{\mathds{E}[G^*(X)],|\mathcal{X}|-1\}$, one can always justify $H_G(X)$ as the minimum number of guesses by emphasizing on the need of receiving an affirmative answer, i.e., until the answer is "Yes". Having said that, the analysis of this paper remains valid in either cases.}.

\begin{defn}
   For a probability vector $\mathbf{p}\in\mathcal{P}(\mathcal{X})$, Let the \emph{rank vector} of $\mathbf{p}$, denoted by $\mathbf{r}_{\mathbf{p}}$, be a vector that labels the elements of $[|\mathcal{X}|]$ according to their order induced by sorting $\mathbf{p}$ in descending order. For example, for $\mathbf{p}_1=\begin{bmatrix}0.6&0.1&0.3\end{bmatrix}^T$, we have $\mathbf{r}_{\mathbf{p}_1}=\begin{bmatrix}1&3&2\end{bmatrix}^T$. For $\mathbf{p}_2=\begin{bmatrix}0.5&0.25&0.25\end{bmatrix}^T$, either of $\begin{bmatrix}1&2&3\end{bmatrix}^T$ or $\begin{bmatrix}1&3&2\end{bmatrix}^T$ could be a candidate\footnote{It can be verified that by imposing the constraint $r_i<r_j$ ($i\neq j$) if and only if $p_i-p_j+\mathds{1}\{p_i=p_j\}(j-i)>0$, i.e., equal probabilities being ranked based on their index order, the rank vector is well defined. Nonetheless, the (possible) ambiguity in the definition of $\mathbf{r}$ is not problematic here.} for $\mathbf{r}_{\mathbf{p}_2}$.  
\end{defn}

\begin{defn}\label{partition}
    The \emph{rank partition} $\mathcal{P}_\mathbf{r}(\mathcal{X})$ is the set of all probability vectors $\mathbf{p}\in\mathcal{P}(\mathcal{X})$ with rank vector $\mathbf{r}$, where $\mathbf{r}$ is an arbitrary permutation of the elements in $[|\mathcal{X}|]$. Hence, the probability simplex $\mathcal{P}(\mathcal{X})$ can be divided into $|\mathcal{X}|!$ equal-rank partitions.
\end{defn}



\begin{Prop}\label{Prop1}
The guessing entropy $H_G(X)$ is a piece-wise linear and concave functional (see Figure \ref{fig:g3d}) of $p_X(\cdot)$.
\end{Prop}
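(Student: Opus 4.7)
The plan is to express $H_G(\mathbf{p}_X)$ as the pointwise minimum of a finite family of linear functionals of $\mathbf{p}_X$, and then invoke the standard fact that such a minimum is automatically concave and piecewise linear.

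First I would enumerate the set of admissible guessing strategies. Because a guessing strategy is a bijection $G:\mathcal{X}\to[|\mathcal{X}|]$, there are exactly $|\mathcal{X}|!$ of them, which I would index by $G_1,G_2,\ldots,G_{|\mathcal{X}|!}$. For each such strategy,
\[
\mathds{E}[G_k(X)]=\sum_{x\in\mathcal{X}}G_k(x)\,p_X(x)=\mathbf{g}_k^T\mathbf{p}_X,
\]
where $\mathbf{g}_k\in\mathbb{R}^{|\mathcal{X}|}$ collects the integer labels $G_k(x)$. Thus each expectation is a linear functional of $\mathbf{p}_X$.

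Next I would appeal to the result of Massey (cited in the text) which guarantees that the minimizer $G^*$ of $\mathds{E}[G(X)]$ orders the alphabet in decreasing order of probability. This lets me rewrite the definition as
\[
H_G(\mathbf{p}_X)=\min_{k\in[|\mathcal{X}|!]}\mathbf{g}_k^T\mathbf{p}_X,
\]
i.e., the pointwise minimum over finitely many affine functionals of $\mathbf{p}_X$. A minimum of affine functions is concave, and a minimum of \emph{finitely} many affine functions is piecewise linear; both properties transfer directly to $H_G$.

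Finally, I would tie the pieces to the rank partitions introduced in Definition~\ref{partition}: on each $\mathcal{P}_\mathbf{r}(\mathcal{X})$ the ordering of the entries of $\mathbf{p}$ is fixed, hence the minimizing $G_k$ is fixed as well (the bijection sending $x_i$ to $r_i$), and $H_G$ restricted to that cell equals the single linear function $\sum_i r_i p_i$. This shows that the $|\mathcal{X}|!$ rank partitions are exactly the maximal regions of linearity. I do not foresee any serious obstacle: the only subtlety is making sure that boundary cases (where several $p_i$ coincide and the rank vector is non-unique) are handled, but this is harmless because the different candidate strategies give the same value of $\mathbf{g}_k^T\mathbf{p}_X$ on those boundaries, so $H_G$ remains well defined and continuous there.
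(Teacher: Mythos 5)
Your proposal is correct and takes essentially the same route as the paper: the paper also uses Massey's decreasing-probability ordering to get $H_G(\mathbf{p})=\mathbf{r}_{\mathbf{p}}^T\cdot\mathbf{p}\leq\mathbf{r}^T\cdot\mathbf{p}$ for every rank vector $\mathbf{r}$, which is exactly your representation of $H_G$ as a pointwise minimum of finitely many linear functionals, with the rank partitions as the linearity regions. The only difference is cosmetic: where you invoke the standard fact that a finite minimum of affine functions is concave and piecewise linear, the paper writes out the corresponding convex-combination inequality explicitly.
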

\begin{proof}
     Let $X\sim\mathbf{p}$, and it is immediate that $H_G(X)=\mathbf{r}_{\mathbf{p}}^T\cdot\mathbf{p}$, which proves the piece-wise linearity of $H_G(X)$ in $\mathbf{p}$. Furthermore, we have $\mathbf{r}_{\mathbf{p}}^T\cdot\mathbf{p}\leq\mathbf{r}^T\cdot\mathbf{p}$ for any rank vector $\mathbf{r}\neq\mathbf{r}_{\mathbf{p}}$, which follows from \cite{Massey}. Hence, the concavity is proved as follows. For $\lambda\in[0,1]$ and two arbitrary $\mathbf{p},\mathbf{q}\in\mathcal{P}(\mathcal{X})$, let $\tilde{\mathbf{p}}\triangleq \lambda\mathbf{p}+(1-\lambda)\mathbf{q}$. We have,
    \begin{align*}
       H_G(\tilde{\mathbf{p}})&=\mathbf{r}_{\tilde{\mathbf{p}}}^T\cdot\tilde{\mathbf{p}}\\
        &=\lambda\mathbf{r}_{\tilde{\mathbf{p}}}^T\cdot\mathbf{p}+(1-\lambda)\mathbf{r}_{\tilde{\mathbf{p}}}^T\cdot\mathbf{q}\\
        &\geq \lambda\mathbf{r}_{{\mathbf{p}}}^T\cdot\mathbf{p}+(1-\lambda)\mathbf{r}_{{\mathbf{q}}}^T\cdot\mathbf{q}\\
        &=\lambda H_G({{\mathbf{p}}})+(1-\lambda)H_G({{\mathbf{q}}}),
    \end{align*}
    where the inequality is strict if and only if $\mathbf{p},\mathbf{q}$ belong to different rank partitions. 
   \end{proof}

\begin{figure}
    \centering
    \includegraphics[width=.8\columnwidth]{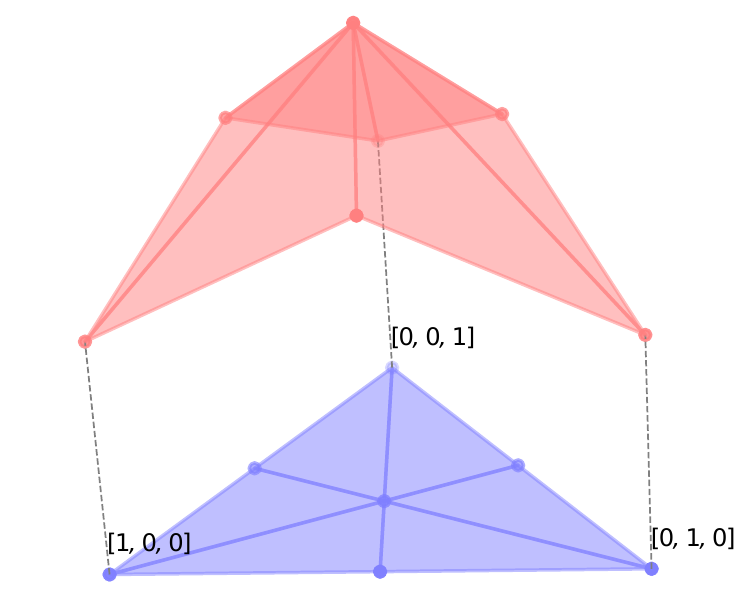}
    \caption{Guessing entropy for a ternary r.v. $X$. The probability simplex $\mathcal{P}(\mathcal{X})$ is divided into $|\mathcal{X}|!=6$ rank partitions.}
    \label{fig:g3d}
\end{figure}

\subsection{Brute-force inference attack}

A brute-force adversary aims at inferring the private data, i.e., $X$, via trial and error, or equivalently, guessing. Here, we adopt a model similar to the cost function-based inference in \cite{du2012} by defining the cost of inference attack as the number of trials which lead to a correct guess.
Prior to observing any realization of the released data $U$, the adversary uses the optimal guessing strategy $G^*(\cdot)$ to minimize its cost of inference, i.e., $C^*_0 = H_G(X)$. After observing $U=u$, the adversary can update its belief about the private data as the posterior $p_{X|U}(\cdot|u)$, which in turn leads to an updated\footnote{Note that knowledge of the posterior is not necessary in the sense that the adversary could also get the same updated strategy by only knowing the rank vector associated to it, i.e., $\mathbf{r}_{\mathbf{p}_{X|u}}$.} guessing strategy $G^*(\cdot|u)$. Therefore, the minimum cost of inference is now $C^*_u = H_G(X|U=u)$. Considering the average additive gain (where the average is over $U$) that the adversary sees in its inference cost, we are ready to define the privacy leakage measure and investigate some of its properties as follows.

\begin{defn}The \textit{guessing leakage} is defined as:
\begin{equation}\label{GL}
    GL(X\longrightarrow U)\triangleq H_G(X)-H_G(X|U).
\end{equation}
\end{defn}
\begin{Prop}
We have
\begin{equation}\label{GLpositive}
    GL(X\longrightarrow U)\geq 0,
\end{equation}
with equality if and only if the rank vector associated with $p_{X|U}(\cdot|u)$, i.e., $\mathbf{r}_{\mathbf{p}_{X|u}}$, does not change with $u, \; \forall u \in \mathcal{U}$. Note that this is a weaker condition than statistical independence.
\end{Prop}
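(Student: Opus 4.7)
The plan is to exploit the explicit piece-wise linear formula $H_G(\mathbf{p}) = \mathbf{r}_{\mathbf{p}}^T \mathbf{p}$ from Proposition~\ref{Prop1}, together with the Massey-style inequality $\mathbf{r}_{\mathbf{q}}^T \mathbf{q} \leq \mathbf{r}^T \mathbf{q}$ (valid for any rank vector $\mathbf{r}$) that was already invoked in the proof of that proposition. First I would expand
\begin{align*}
H_G(X|U) &= \sum_u p_U(u)\, \mathbf{r}_{\mathbf{p}_{X|u}}^T\, \mathbf{p}_{X|u},\\
H_G(X) &= \mathbf{r}_{\mathbf{p}_X}^T\, \mathbf{p}_X = \sum_u p_U(u)\, \mathbf{r}_{\mathbf{p}_X}^T\, \mathbf{p}_{X|u},
\end{align*}
where the second identity uses $\mathbf{p}_X = \sum_u p_U(u)\, \mathbf{p}_{X|u}$. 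Subtracting yields the key representation
\begin{equation*}
GL(X\longrightarrow U) = \sum_u p_U(u)\,\bigl(\mathbf{r}_{\mathbf{p}_X} - \mathbf{r}_{\mathbf{p}_{X|u}}\bigr)^T \mathbf{p}_{X|u},
\end{equation*}
and applying Massey's inequality to each $\mathbf{p}_{X|u}$ with the arbitrary choice $\mathbf{r}=\mathbf{r}_{\mathbf{p}_X}$ forces every summand to be nonnegative, establishing \eqref{GLpositive}.

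For the equality characterization, $GL(X\longrightarrow U)=0$ iff each summand above vanishes, i.e., iff $\mathbf{r}_{\mathbf{p}_X}$ is itself an optimal rank vector for every posterior $\mathbf{p}_{X|u}$ with $p_U(u)>0$. Under the tie-breaking convention recalled in Definition 1, this is exactly the statement that $\mathbf{r}_{\mathbf{p}_{X|u}} = \mathbf{r}_{\mathbf{p}_X}$ for all such $u$, i.e., the posterior rank vector does not vary with $u$. For the converse direction, if all the $\mathbf{r}_{\mathbf{p}_{X|u}}$ coincide with some common $\mathbf{r}^*$, then the coordinate-wise ordering of the probabilities is preserved under convex combinations, so $\mathbf{r}_{\mathbf{p}_X}=\mathbf{r}^*$ as well, and the representation collapses to zero.

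Finally, to see that the equality condition is strictly weaker than $X\independent U$, it suffices to exhibit a two-symbol counterexample such as $\mathbf{p}_{X|u_1} = [0.7,\ 0.3]^T$ and $\mathbf{p}_{X|u_2} = [0.6,\ 0.4]^T$: both posteriors share the rank vector $[1,\ 2]^T$, so guessing leakage vanishes, yet $X$ and $U$ are clearly not independent. The main technical subtlety I anticipate concerns the possible non-uniqueness of $\mathbf{r}_{\mathbf{p}_{X|u}}$ when some posterior masses coincide; the tie-breaking rule already built into Definition 1 eliminates this ambiguity, and the argument then goes through verbatim.
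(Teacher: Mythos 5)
Your proof is correct and takes essentially the same route as the paper: the paper deduces $H_G(X|U)\le H_G(X)$ from the concavity of $H_G$ established in Proposition~1, and your term-by-term application of Massey's inequality to the decomposition $\mathbf{p}_X=\sum_u p_U(u)\,\mathbf{p}_{X|u}$ is precisely that concavity argument written out explicitly, with the same equality condition (all posteriors admitting a common optimal rank vector, i.e., lying in one rank partition). The only hair-splitting caveat is that when some posterior has tied masses, $GL=0$ forces a \emph{common admissible} rank vector rather than equality of the tie-broken rank vectors (e.g., a uniform posterior mixed with a strictly ordered one), but the paper's own statement and its footnote on tie-breaking gloss over this in exactly the same way.
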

\begin{proof}
    From Proposition \ref{Prop1}, we have
    \begin{align*}
      H_G(X|U)&=\sum p(u)H_G(\mathbf{p}_{X|u})  \\
      &\leq H_G\left(\sum p(u)\mathbf{p}_{X|u}\right)\\
      &=H_G(X).
    \end{align*}
    where the inequality is tight if and only if all the $\mathbf{p}_{X|u}$'s  ($\forall u$) belong to the same rank partition.
    Alternatively, (\ref{GLpositive}) can be proved by \cite[Corollary 1]{Arikan}.
\end{proof}

\begin{Rem}
The difference in (\ref{GL}) is reminiscent of the mutual information expanded in terms of entropies, which follows from its logarithmic nature. Nonetheless, one can observe that $GL(X\longrightarrow U)$ has no relation with mutual information in terms of one being a lower/upper bound to the other. Let $X,U$ be binary random variables with $X|\{U=u_i\}\sim \mbox{Bern}(\frac{i}{i+3}),\ i=1,2$. It is immediate that $X\not\independent U$, and hence, $I(X;U)>0$. Also, since we have $\mathbf{r}_{\mathbf{p}_{X|u_1}}=\mathbf{r}_{\mathbf{p}_{X|u_2}}$, we get $GL(X\longrightarrow U)=0$. Therefore, $I(X;U)>GL(X\longrightarrow U)$. In another example, let $X=U$, which results in $I(X;U)=H(X)$, and $GL(X\longrightarrow U)=H_G(X)-1$. According to \cite[Section III]{Massey}, we can have distributions for which $H(X)$ is vanishingly small, while $H_G(X)$ is large. As a result, $I(X;U)<GL(X\longrightarrow U)$.
\end{Rem}

A privacy measure can be investigated in terms of data processing inequalities. 
\begin{defn}
	A privacy measure $J(X;U)$ is said to satisfy the post-processing inequality \cite{Ishwar} if and only if for any Markov chain $X-U-U'$, we have $J(X;U')\leq J(X;U)$, and the linkage inequality, if and only if $J(X;U') \leq J(U;U')$.
\end{defn}

\begin{thm} Guessing leakage satisfies the post-processing inequality, but not the linkage inequality.
\end{thm}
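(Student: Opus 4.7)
The plan is to split the theorem into two independent parts: a positive result for post-processing that follows quickly from the concavity of $H_G$ in Proposition \ref{Prop1}, and a negative result for linkage via an explicit low-dimensional counterexample.

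For the post-processing direction, I would start from any Markov chain $X - U - U'$ and write the key identity
\[
\mathbf{p}_{X|u'} \;=\; \sum_{u} p_{U|U'}(u|u')\,\mathbf{p}_{X|u}, \qquad \forall u' \in \mathcal{U}',
\]
which exhibits every conditional vector $\mathbf{p}_{X|u'}$ as a convex combination of the vectors $\{\mathbf{p}_{X|u}\}_u$ with weights $\{p_{U|U'}(u|u')\}_u$. Applying the concavity of $H_G$ from Proposition \ref{Prop1} then gives $H_G(\mathbf{p}_{X|u'}) \geq \sum_u p_{U|U'}(u|u')\, H_G(\mathbf{p}_{X|u})$. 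Averaging over $U'$ and using $\sum_{u'} p_{U'}(u')\, p_{U|U'}(u|u') = p_U(u)$ yields $H_G(X|U') \geq H_G(X|U)$, and subtracting both sides from $H_G(X)$ produces the desired inequality $GL(X\longrightarrow U') \leq GL(X\longrightarrow U)$.

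For the linkage direction, I would exhibit a compact counterexample. Take $X$ uniform on $\{1,2,3,4\}$, set $U \triangleq \mathds{1}\{X \in \{3,4\}\}$, and put $U'= U$; the chain $X-U-U'$ is trivially Markov. Then $H_G(X) = (1+2+3+4)/4 = 2.5$, and given either value of $U'$ the conditional $X$ is uniform on a two-element set, so $H_G(X|U') = 1.5$ and hence $GL(X\longrightarrow U') = 1$. In contrast, $U$ is uniform binary with $H_G(U) = 1.5$ and $H_G(U|U')=1$, yielding $GL(U\longrightarrow U') = 0.5 < 1 = GL(X\longrightarrow U')$, which refutes the linkage inequality.

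I do not anticipate any genuine obstacle here: the post-processing part is essentially a one-line use of Proposition \ref{Prop1} once the convex-combination identity for $\mathbf{p}_{X|u'}$ is written down, and the counterexample is chosen small enough that all four guessing entropies can be read off by inspection. The only thing to watch is that the counterexample respects the Markov hypothesis $X - U - U'$, which is automatic when $U$ is a deterministic function of $X$ and $U'$ is a deterministic function of $U$.
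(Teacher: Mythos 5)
Your proposal is correct. The post-processing half is exactly the paper's argument: write $\mathbf{p}_{X|u'}=\sum_u p(u|u')\mathbf{p}_{X|u}$ from the Markov chain, apply the concavity of $H_G$ from Proposition \ref{Prop1}, and average over $U'$ to get $H_G(X|U')\geq H_G(X|U)$. For the linkage half you use a different counterexample than the paper. The paper takes a ternary $U$ with two conditionals $\mathbf{p}_{U|u'_0},\mathbf{p}_{U|u'_1}$ sharing the same rank vector (so $GL(U\longrightarrow U')=0$ exactly) and sets $X=U\bmod 2$ so that $GL(X\longrightarrow U')>0$; this exhibits a violation with a genuinely nondegenerate $U'\neq U$ and the sharpest possible contrast (zero versus positive leakage). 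Your example instead takes $X$ uniform on four values, $U=\mathds{1}\{X\in\{3,4\}\}$, and $U'=U$; the chain $X-U-U$ is trivially Markov, and your four guessing entropies ($2.5$, $1.5$, $1.5$, $1$) are correct, giving $GL(X\longrightarrow U')=1>0.5=GL(U\longrightarrow U')$. This is a valid refutation under the paper's definition, which places no restriction ruling out $U'=U$, and it has the advantage of being checkable by inspection; the price is that it rests on the degenerate choice $U'=U$ (i.e., it really shows $GL(X\longrightarrow U)\not\leq GL(U\longrightarrow U)$), whereas the paper's construction shows the failure persists for a nontrivial downstream variable and that $GL(U\longrightarrow U')$ can even vanish while $GL(X\longrightarrow U')$ stays positive.
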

\begin{proof}
	Consider the Markov chain $X-U-U'$. We have
	\begin{align}
		H_G(X|U') &= \sum_{u'}p(u')\nonumber	H_G(X|U'=u')\nonumber\\
		&= \sum_{u'}p(u')H_G\left(\sum_u p(u|u')\mathbf{p}_{X|u}\right)\nonumber\\
		&\geq \sum_{u'}p(u')\sum_u p(u|u')H_G(\mathbf{p}_{X|u})\label{eq1}\\
		& = \sum_{u}p(u)H_G(\mathbf{p}_{X|u})\nonumber\\
		& = H_G(X|U), \nonumber
	\end{align}
	where (\ref{eq1}) follows from the concavity of $H_G$. Hence, we have
	\begin{equation*}
	    GL(X\longrightarrow U')\leq GL(X\longrightarrow U).
	\end{equation*}
	In other words, no independent processing of the released data, $U$, can increase the privacy leakage. 
	
	To show that $GL$ does not satisfy the linkage inequality, let $U'\sim\mbox{Bern}(\theta)$ for some $\theta\in(0,1)$. Also, let $\mathcal{U}=[3]$ with $\mathbf{p}_{U|u'_0}=[.35\ .4\ .25]^T$ and $\mathbf{p}_{U|u'_1}=[.3\ .6\ .1]^T$. By setting $X=U\mbox{ mod }2$, we have $X-U-U'$ form a Markov chain, and $GL(X\longrightarrow U')>GL(U\longrightarrow U')=0$.
\end{proof}

\section{Utility-privacy trade-off }
Having defined the utility and privacy measures, the utility-privacy trade-off can be written as\footnote{We assume that $Y\not\independent W$, since otherwise, $t(\epsilon)=0,\ \forall \epsilon.$}:
\begin{equation}\label{toff1.5}
t(\epsilon)\triangleq\sup_{\substack{p_{U|W}:\\(X,Y)-W-U\\GL(X\longrightarrow U)\leq\epsilon}}I_f(Y;U),
\end{equation}
where $\epsilon \in [0,GL(X\longrightarrow W)]$. By the routine application of cardinality bounding techniques \cite{Elgamal}, it becomes sufficient to have $|\mathcal{U}|\leq|\mathcal{W}|+1$. Furthermore, the supremum can be replaced by maximum, since a continuous objective function attains its supremum over a compact set. 
Also, searching over $p_{U|W}$ can be equivalent\footnote{Trivially, for any pair $(p_U,\mathbf{p}_{W|u})$, $(X,Y)-W-U$ can be constructed.} to searching over the pair $(p_U,\mathbf{p}_{W|u})$, such that $\mathbf{p}_W=\sum_u p_U(u)\mathbf{p}_{W|u}$. Therefore, the problem reduces to

\begin{equation}\label{toff2}
t(\epsilon)=\!\!\!\!\!\!\!\!\!\!\!\!\!\!\!\!\max_{\substack{p(\cdot),\mathbf{p}_{W|u}\in\mathcal{P}(\mathcal{W}):\\\sum p(u)\mathbf{p}_{W|u}=\mathbf{p}_W,\\\sum p(u)H_G\left(\mathbf{P}_{X|W}\mathbf{p}_{W|u}\right)\geq H_G(X)-\epsilon}}\!\!\!\!\!\!\!\!\!\!\!\!\!\!\!\!\sum_u p(u)D_f\left(\mathbf{P}_{Y|W}\mathbf{p}_{W|u}||\mathbf{p}_{Y}\right).
\end{equation}

For an arbitrary rank vector $\mathbf{r}$, let $\mathcal{Q}_\mathbf{r}(\mathcal{W})$ denote the inverse image of $\mathcal{P}_\mathbf{r}(\mathcal{X})$ under $\mathbf{P}_{X|W}$, which is a linear transformation from $\mathcal{P}(\mathcal{W})$ to $\mathcal{P}(\mathcal{X})$. For a given $\mathbf{r}$, $\mathcal{Q}_\mathbf{r}(\mathcal{W})$ is a convex polytope with a finite number of extreme points, since it can be written as the intersection of a finite number of closed half-spaces in $\mathcal{P}(\mathcal{W})$\footnote{Depending on the transformation $\mathbf{P}_{X|W}$, $\mathcal{Q}_\mathbf{r}(\mathcal{W})$ could be an empty set for some values of $\mathbf{r}$. Note that the empty set can be viewed as a null polytope.}. For example, with $\mathbf{r}=[1\ 3\ 2]^T$, we have $\mathcal{Q}_\mathbf{r}(\mathcal{W})=\{\mathbf{p}\in\mathcal{P}(\mathcal{W})|\mathbf{v}_1\cdot\mathbf{p}\geq\mathbf{v}_3\cdot\mathbf{p},\ \mathbf{v}_3\cdot\mathbf{p}\geq\mathbf{v}_2\cdot\mathbf{p}\}$, which forms the intersection of $\mathcal{P}(\mathcal{W})$ with two closed half-spaces, where $\mathbf{v}_i$ denotes the $i$th row of $\mathbf{P}_{X|W}$. Let $\mathbb{Q}_r$ denote the set of extreme points of $\mathcal{Q}_\mathbf{r}(\mathcal{W})$. As a result, any element of $\mathcal{Q}_\mathbf{r}(\mathcal{W})$ can be written as a convex combination of the elements of $\mathbb{Q}_r$. 

In what follows, we show that there is no loss of optimality in replacing $\mathbf{p}_{W|u}\in\mathcal{P}(\mathcal{W})$ in (\ref{toff2}) with $\mathbf{p}_{W|u}\in\mathbb{Q}$, where $\mathbb{Q}\triangleq\cup_r\mathbb{Q}_r$. It is already known that the $f$-divergence, $D_f(p||q)$, is convex in $(p,q)$, and in $p$ for a fixed $q$, which implies the convexity of the objective function of (\ref{toff2}) in $\mathbf{p}_{W|u}$. For an arbitrary $\mathbf{p}_{W|u}\in\mathcal{P}(\mathcal{W})$, we have $\mathbf{p}_{W|u}\in\mathcal{Q}_\mathbf{r}(\mathcal{W})$ for some $\mathbf{r}$. Writing this $\mathbf{p}_{W|u}$ as a convex combination of the elements of $\mathbb{Q}_r$ does not alter $H_G(\cdot)$, which follows from the piece-wise linearity of $H_G(\cdot)$, i.e., $H_G\left(\mathbf{P}_{X|W}\mathbf{z}\right)$ is linear in $\mathbf{z}$ for $\mathbf{z}\in\mathcal{Q}_\mathbf{r}(\mathcal{W})$. Furthermore, this does not decrease the objective function, which is a direct consequence of the convexity of $D_f\left(\mathbf{P}_{Y|W}\mathbf{z}||\mathbf{p}_{Y}\right)$ in $\mathbf{z}$. Therefore, in (\ref{toff2}), $\mathbf{p}_{W|u}\in\mathcal{P}(\mathcal{W})$ can be replaced with $\mathbf{p}_{W|u}\in\mathbb{Q}$, which leads to the following theorem.

\begin{thm}\label{thm:lp}
	The utility-privacy trade-off in (\ref{toff2}) can be solved by a linear program (LP).
\end{thm}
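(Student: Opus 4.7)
The plan is to leverage the reduction already established in the paragraphs preceding the theorem, namely that the optimization in (\ref{toff2}) can be restricted without loss of optimality to conditional distributions $\mathbf{p}_{W|u}$ lying in the finite set $\mathbb{Q} = \cup_{\mathbf{r}} \mathbb{Q}_{\mathbf{r}}$ of extreme points of the rank-partition polytopes $\mathcal{Q}_{\mathbf{r}}(\mathcal{W})$. Since $\mathbb{Q}$ is finite, I would reindex the optimization variables: enumerate $\mathbb{Q} = \{\mathbf{q}_1, \ldots, \mathbf{q}_N\}$ and introduce nonnegative weights $\alpha_i \triangleq p(u_i)$ for each extreme point, with the interpretation that $u_i$ is the released symbol associated with the conditional distribution $\mathbf{q}_i$.

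Next I would rewrite each ingredient of (\ref{toff2}) in terms of $\boldsymbol{\alpha} = (\alpha_1, \ldots, \alpha_N)$. Because $\mathbf{q}_i$ is a fixed vector, the marginal constraint $\sum_u p(u)\mathbf{p}_{W|u} = \mathbf{p}_W$ becomes the linear constraint $\sum_i \alpha_i \mathbf{q}_i = \mathbf{p}_W$. Similarly, since $\mathbf{q}_i \in \mathcal{Q}_{\mathbf{r}_i}(\mathcal{W})$ for some rank vector $\mathbf{r}_i$, the quantities $h_i \triangleq H_G(\mathbf{P}_{X|W}\mathbf{q}_i)$ and $d_i \triangleq D_f(\mathbf{P}_{Y|W}\mathbf{q}_i \,\|\, \mathbf{p}_Y)$ are precomputable constants. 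The privacy constraint then reads $\sum_i \alpha_i h_i \geq H_G(X) - \epsilon$ and the objective becomes $\sum_i \alpha_i d_i$, both linear in $\boldsymbol{\alpha}$. Adjoining the simplex constraints $\alpha_i \geq 0$ and $\sum_i \alpha_i = 1$ (the latter being implied by summing the marginal constraint), the problem assumes the canonical LP form
\begin{equation*}
t(\epsilon) = \max_{\boldsymbol{\alpha} \geq 0} \; \sum_{i=1}^{N} \alpha_i d_i \quad \text{s.t.} \quad \sum_{i} \alpha_i \mathbf{q}_i = \mathbf{p}_W, \; \sum_{i} \alpha_i h_i \geq H_G(X) - \epsilon.
\end{equation*}

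The only genuine work, and the part I would have called the main obstacle, has in fact already been carried out above the theorem: namely, showing that restricting $\mathbf{p}_{W|u}$ to the extreme points $\mathbb{Q}$ does not alter the privacy constraint (piece-wise linearity of $H_G$ on each $\mathcal{Q}_{\mathbf{r}}(\mathcal{W})$) and does not decrease the objective (convexity of $D_f$ in its first argument). Given this reduction, the proof of the theorem reduces to the bookkeeping above, together with the remark that the cardinality bound $|\mathcal{U}| \leq |\mathcal{W}|+1$ from (\ref{toff1.5}) is automatically respected by any basic feasible solution of the LP, since the LP has $|\mathcal{W}|$ equality constraints from the marginal plus one inequality constraint for the privacy budget, so an optimal vertex has at most $|\mathcal{W}|+1$ nonzero $\alpha_i$'s. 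This observation also provides, as a by-product, the concave and piece-wise linear dependence of $t(\epsilon)$ on $\epsilon$ claimed in the abstract, via standard LP sensitivity analysis.
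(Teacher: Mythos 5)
Your proposal is correct and follows essentially the same route as the paper: after the reduction to the finite set $\mathbb{Q}$ established before the theorem, the paper likewise treats $D_f\left(\mathbf{P}_{Y|W}\mathbf{q}_i||\mathbf{p}_{Y}\right)$ and $H_G\left(\mathbf{P}_{X|W}\mathbf{q}_i\right)$ as precomputed constants, so that the objective, the privacy constraint, and the marginal constraint are all linear in the weights $p(u_i)$, with $\sum_i p(u_i)=1$ being redundant given the marginal constraint. Your additional observations on the cardinality bound via basic feasible solutions and on LP sensitivity are consistent with the paper's surrounding remarks and corollary, but do not change the argument.
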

\begin{proof}
	Let $\{\mathbf{q}_i\}_{i=1}^{k}$ denote the elements of $\mathbb{Q}$. The problem reduces to
	\begin{align}\label{lp}
	\begin{split}
	t(\epsilon)=\max_{p(\cdot)\geq 0} & \sum_{i=1}^k p(u_i)\  D_f\left(\mathbf{P}_{Y|W}\mathbf{q}_{i}||\mathbf{p}_{Y}\right)\\
	\text{s.t.} & \sum_{i=1}^k p(u_i)\  \textbf{r}_{(\mathbf{P}_{X|W}\mathbf{q}_i)}^T\mathbf{P}_{X|W}\mathbf{q}_i  \geq H_G(X)-\epsilon, \\
	& \sum_{i=1}^k p(u_i) \mathbf{q}_i = \mathbf{p}_W, 
	\end{split}
	\end{align}
	which is an LP\footnote{Note that the constraint $\sum_i p(u_i)=1$ is redundant, as it is implicitly implied by $\sum_{i}p(u_i)\mathbf{q}_i = \mathbf{p}_W$.}.
\end{proof}


\begin{corr}
The utility-privacy trade-off in (\ref{toff1.5}) is a concave and piece-wise linear function of $\epsilon$ (see Fig.~\ref{fig:opt}).
\end{corr}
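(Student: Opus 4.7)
The plan is to read the claim straight off the LP formulation in Theorem~\ref{thm:lp}. The key observation is that the parameter $\epsilon$ enters (\ref{lp}) only as the right-hand side of the single privacy inequality, while the objective, the nonnegativity constraints, and the marginal equality $\sum_i p(u_i)\mathbf{q}_i=\mathbf{p}_W$ are all linear in $p(\cdot)$ and independent of $\epsilon$. This puts us in the classical setting of parametric right-hand-side analysis for linear programs.

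For concavity I would argue directly at the LP level. Fix $\epsilon_1,\epsilon_2\in[0,GL(X\longrightarrow W)]$ and let $p_1^\star(\cdot)$, $p_2^\star(\cdot)$ be corresponding LP optimizers. For $\lambda\in[0,1]$ set $p_\lambda \triangleq \lambda p_1^\star + (1-\lambda) p_2^\star$. Nonnegativity is preserved, the marginal constraint $\sum_i p_\lambda(u_i)\mathbf{q}_i=\mathbf{p}_W$ still holds by linearity, and combining the two guessing-leakage inequalities with weights $\lambda$ and $1-\lambda$ shows that $p_\lambda$ is feasible at parameter value $\lambda\epsilon_1+(1-\lambda)\epsilon_2$. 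Since the LP objective is linear in $p(\cdot)$, the value attained by $p_\lambda$ equals $\lambda t(\epsilon_1)+(1-\lambda)t(\epsilon_2)$, giving $t(\lambda\epsilon_1+(1-\lambda)\epsilon_2)\geq \lambda t(\epsilon_1)+(1-\lambda)t(\epsilon_2)$.

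For piecewise linearity I would invoke the standard parametric right-hand-side sensitivity theorem: when an LP's constraint matrix and cost vector are fixed and a single RHS entry depends affinely on a scalar parameter, the optimal value is a piecewise linear function of that parameter on the interval where the program is feasible, with breakpoints located at the finitely many values of $\epsilon$ where the optimal basis changes. Our LP in (\ref{lp}) has finitely many variables (indexed by $\mathbb{Q}$) and finitely many constraints, hence finitely many bases and thus finitely many linear pieces; combined with the concavity established above this delivers the concave, piecewise-linear shape shown in Fig.~\ref{fig:opt}. The one point I would take care to verify is that the LP stays feasible throughout $[0,GL(X\longrightarrow W)]$, but this is straightforward: any $p(\cdot)$ feasible at $\epsilon=0$ remains feasible for every larger $\epsilon$ since the privacy constraint only relaxes as $\epsilon$ grows, and $\epsilon=0$ is achieved by the trivial constant release $U\equiv\mbox{const}$. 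I do not expect any substantive obstacle here; the conceptual work has already been done in Theorem~\ref{thm:lp}, and this corollary is essentially a parametric-LP reading of that result.
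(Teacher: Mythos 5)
Your proposal is correct and takes essentially the same route as the paper, whose entire proof is a citation to parametric right-hand-side LP sensitivity analysis applied to the LP of Theorem~\ref{thm:lp}: $\epsilon$ enters (\ref{lp}) only affinely in one RHS entry, so the optimal value is concave and piecewise linear in $\epsilon$. The only difference is that you make the concavity half self-contained via the convex-combination-of-optimizers argument (and check feasibility over the whole interval) rather than leaving both halves to the cited lemma, which is a fine, slightly more explicit rendering of the same idea.
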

\begin{proof}
    This follows from the LP sensitivity analysis \cite[Lemma 2]{jansen1997sensitivity}.
\end{proof}
\begin{Rem}
If the elements of $\mathbf{p}_X$ are distinct, it is always possible to reveal some information with no privacy leakage, i.e., $t(0)>0$. Moreover, if all the columns of $\mathbf{P}_{X|W}$ belong to the same rank partition, no data release can cause any leakage of privacy, i.e., $t(\epsilon)=I_f(W;W),\forall \epsilon.$
\begin{proof}
    If the elements of $\mathbf{p}_X$ are distinct, we have $\mathbf{p}_X\in\mbox{int}(\mathcal{P}_{\mathbf{r}}(\mathcal{X}))$ for some $\mathbf{r}$. Since $Y$ and $W$ are not independent, there exists a vector $\mathbf{v}$ such that i) $\mathbf{p}_Y\neq\mathbf{P}_{Y|W}(\mathbf{p}_W+\eta\mathbf{v})$ for sufficiently small $\eta>0$ and ii) $\mathbf{1}^T\cdot\mathbf{v}=0$. Let $U'\sim\mbox{Bern}(\frac{1}{2})$, and set $\mathbf{p}_{W|u'_i}=\mathbf{p}_W+(-1)^i\eta\mathbf{v}$ for $i=1,2.$ As a result, $p_{Y,U'}\neq p_Y\cdot p_{U'}$, which results in $I_f(Y;U')>0$. Also, for sufficiently small $\eta$, $\mathbf{p}_{X|u'_i}$ ($=\mathbf{P}_{X|W}\mathbf{p}_{W|u'_i}, i=1,2$) lie in the same rank partition as $\mathbf{p}_X$, which results in $GL(X\longrightarrow U')=0$. Therefore, we have $t(0)\geq I_f(Y;U')>0$ The second claim is proved by setting $U=W$, using the data-processing inequality for $f$-information, and noting that $GL(X\longrightarrow W)=0$.
\end{proof} 
\end{Rem}

\begin{Rem}
In order to specify the LP for a particular setting, it is sufficient to identify the elements of $\mathbb{Q}$. The procedure of finding the extreme points of a convex polytope is a classical problem,  which  is  omitted  here  due  to  lack  of space. In the special case of $W=X$, i.e., when the curator has direct access only to the sensitive data, these points are already known as $(1,0,\ldots,0)^T$, $(\frac{1}{2},\frac{1}{2},0,\ldots,0)^T$, $(\frac{1}{3},\frac{1}{3},\frac{1}{3},0,\ldots,0)^T$ and so on, and their permutations, which are nothing but the extreme points of all the rank partitions of $\mathcal{P}(\mathcal{W})$.
\end{Rem}

\begin{Rem}
Theorem 2 is a direct consequence of the convexity of the objective function, and the piece-wise linearity of the privacy measure. Hence, a similar analysis applies when the utility is measured by the minimum mean square error (MMSE), minimum probability of error, or $GL(Y\longrightarrow U)$.
\end{Rem}

\begin{Rem}
Guessing leakage is defined in (\ref{GL}) as the additive gain in the inference cost of a brute-force attacker. Alternatively, one could define the following multiplicative gain as the privacy-leakage measure:
\begin{equation}\label{GLm}
   GL_m(X\longrightarrow U)\triangleq \frac{H_G(X)}{H_G(X|U)},
\end{equation}
which belongs to $[1,\frac{|\mathcal{X}|+1}{2}]$. It can be readily verified that the two theorems of this paper remain valid when (\ref{GLm}) replaces (\ref{GL}). When the adversary is a memoryless guesser, i.e., each new guess is independent of the previous guesses, we have
\begin{equation*}
   \log_2GL_m(X\longrightarrow U)={I_{\frac{1}{2}}(X;U)},
\end{equation*}
where $I_{\alpha}(X;U)$ denotes the Arimoto mutual information of order $\alpha(\geq 0)$, which follows from \cite[Theorem 1]{Huleihel}. However, this does not hold in general, and is only pertinent to the special case of a memoryless guesser. 
\end{Rem}

\section{Numerical Results}

\begin{figure}
    \centering
    \includegraphics[width=\columnwidth]{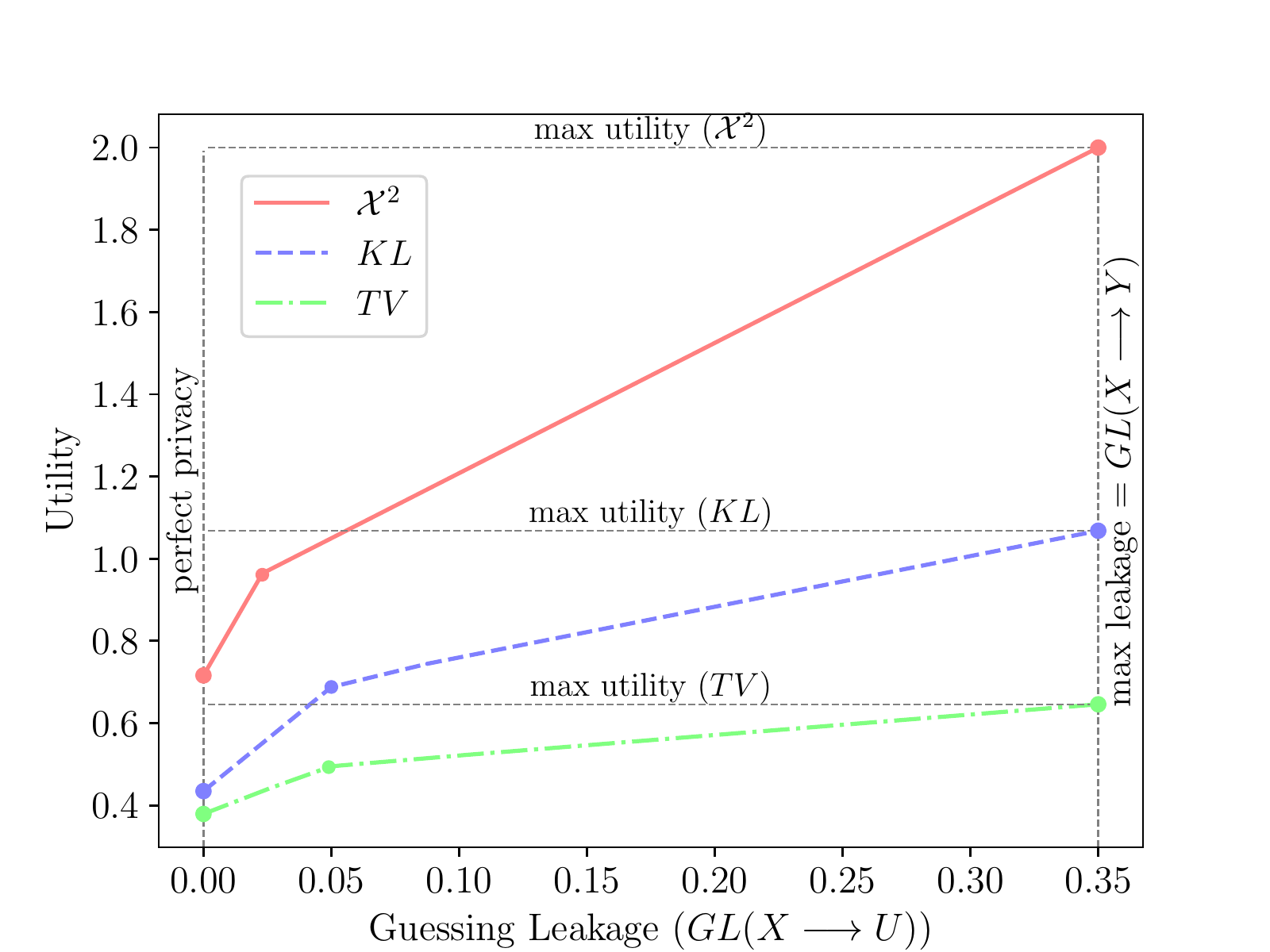}
    \caption{Optimal utility-privacy trade-off for three different utility measures.}
    \label{fig:opt}
\end{figure}

Suppose an attacker has access to a list of user IDs that potentially belong to individuals with highly sensitive job positions. The goal of the attacker is to discover whether any of these IDs are related to a person working for a list of presumed government agencies. Each organization has a registered domain name, hence the attacker can build all of the possible email addresses by combining user IDs with domain names, and exhaustively checking the existence of emails, which can potentially be used for consequent social engineering. Assuming that checking a large number of emails is a risky task for the attacker, they will try to minimize the number of attempts. Again, assuming the attacker has access to the location check-ins of users in a social network and also knows the exact location of all branches of the organizations, they can wisely reorder the emails and check them in a manner that minimizes the number of trials.

Based on this scenario, suppose we are interested in three organizations, $x_1$, $x_2$ and $x_3$, with 500, 300 and 200 employees, respectively. Let $X$ denote the organization variable with probability vector $\mathbf{p}_X=[0.5, 0.3, 0.2]$. Considering 100 user IDs and without any side information, the average number of trials is 170 ($=100*H_G(X)$). Now assume we have 3 cities, $y_1$, $y_2$ and $y_3$, and each organization has offices in 2 of them (for example, $x_1$ has 300 employees in $y_1$ and 200 employees in $y_2$, $x_2$ has 150 employees in $y_1$ and 150 employees in $y_3$, and $x_3$ has 60 employees in $y_2$ and 140 employees in $y_3$.). In this way, we have defined $P_{Y|X}$ where $Y$ denotes the city variable. If the attacker knows the city variable for all users, the minimum number of trials is $100*H_G(X|Y)=135$, which is less than 170 ($GL(X\longrightarrow Y)=0.35$). Hence, a privacy preserving mapping is needed for check-ins of the people with sensitive positions, which can be done by intentionally creating wrong check-ins. This is equivalent to sampling $U$ from $P_{U|Y}$, which obviously increases the level of privacy while it decreases the correctness of location check-ins (utility). Therefore, we should build a trade-off between utility and leakage. Figure~\ref{fig:opt} shows the optimal utility-privacy trade-off for this example, when the utility is measured by three variants of $f$-information corresponding to the $\mathcal{X}^2$-divergence, $KL$-divergence, and total variation ($TV$) distance.

\section{Conclusions and future work}

We considered the problem of privacy against brute-force adversaries. By investigating the properties of guessing entropy, we introduced \emph{guessing leakage} as a privacy measure. We studied the optimal utility-privacy trade-off with $f$-information as the utility measure, and showed it to be the solution of an LP. Unless the curator has direct access only to the private data, we need to identify the extreme points of a convex polytope, whose complexity grows exponentially. Hence, sub-optimal algorithms are to be sought, such as restricting the search space to the set of all deterministic mappings, which is the subject of our ongoing work. Another practical direction is to address this problem when the curator is uncertain (or even unaware) of the underlying distribution.

\bibliographystyle{IEEEtran}
\bibliography{IEEEabrv,ref}

\end{document}